\newcommand\id{\leavevmode\hbox{\small1\kern-3.3pt\normalsize1}}
\theoremstyle{definition}
\newtheorem{theorem}{Theorem}
\newtheorem{corollary}{Corollary}[theorem]
\newtheorem{proposition}[theorem]{Proposition}
\begin{document}

\begin{abstract}
The quantum process framework is devised as a generalization of quantum theory to incorporate indefinite causal structure. In this short paper we point out a restriction in forming products of processes: there exist both classical and quantum processes whose tensor products are not processes. Our main result is a necessary and sufficient condition for when tensor products of multipartite processes are (in)valid. We briefly discuss the implications of this restriction on process communication theory.
\end{abstract}
	
\title{Process Products Need Qualification}
\begin{CJK*}{UTF8}{gbsn}
\author{Ding Jia (贾丁)}
\email{ding.jia@uwaterloo.ca}
\affiliation{Department of Applied Mathematics, University of Waterloo, Waterloo, ON, N2L 3G1, Canada}
\affiliation{Perimeter Institute for Theoretical Physics, Waterloo, ON, N2L 2Y5, Canada}
\author{Nitica Sakharwade}
\email{nsakharwade@perimeterinstitute.ca}
\affiliation{Perimeter Institute for Theoretical Physics, Waterloo, ON, N2L 2Y5, Canada}
\affiliation{Department of Physics and Astronomy, University of Waterloo, Waterloo, ON, N2L 3G1, Canada}
\date{\today}

\maketitle
\end{CJK*}

\section{Introduction}
The quantum process framework \cite{oreshkov2012quantum, araujo2015witnessing, oreshkov2016causal} is devised as a generalization of ordinary quantum theory to incorporate indefinite causal structure \cite{hardy_probability_2005, hardy2007towards, hardy2009quantum, chiribella2013quantum}. In this short paper we point out a restriction on the composability of processes by deriving a necessary and sufficient condition for when tensor products of multipartite processes are (in)valid.

\section{Processes}
In this section we very briefly recall the relevant part of the process framework \cite{oreshkov2012quantum, araujo2015witnessing, oreshkov2016causal} and introduce some nomenclatures. It is postulated that local physics is described by ordinary quantum theory with definite causal structure, while the global (possibly indefinite) causal structure is encoded in processes. A process is a linear map from CP maps describing local physics to real numbers describing probability of observation outcomes. Both channels and states are special cases of processes.

We use $A,B,C,\cdots$ to denote the parties where local physics takes place. A party $A$ is associated with an input system $a_1$ with Hilbert space $\mathcal{H}^{a_1}$ and an output system $a_2$ with Hilbert space $\mathcal{H}^{a_2}$. Through the Choi isomorphism \cite{choi1975completely} processes can be represented as linear operators. A process $W$ associated with parties $A,B,C,\cdots$ is represented as a linear operator $W^{a_1a_2b_1b_2c_1c_2\cdots}\in L(\mathcal{H})$, where $\mathcal{H}:=\mathcal{H}^{a_1}\otimes\mathcal{H}^{a_2}\otimes \mathcal{H}^{b_1}\otimes\mathcal{H}^{b_2} \otimes \mathcal{H}^{c_1}\otimes\mathcal{H}^{c_2}\otimes\cdots$. Sometimes we write the process as $W^{abc\cdots}$ for simplicity.

It is assumed that processes yield normalized and positive outcome probabilities, and can also act on subsystems of local parties. These imply that
\begin{align}
W\ge& 0\label{eq:Wfc1} 
\\
\Tr W=&\dim(\mathcal{H}^{a_2}\otimes\mathcal{H}^{b_2}\otimes\mathcal{H}^{c_2}\otimes\cdots)=:d_O,\label{eq:Wfc2}
\end{align}
and in addition:
\begin{proposition}[Oreshkov and Giarmatzi \cite{oreshkov2016causal}, reformulated using the language of this paper]\label{prop:og}
A multiple partite process obeys the normalization of probability condition if and only if in addition to the identity term it contains at most terms which is type $a_1$ on some party $A$.
\end{proposition}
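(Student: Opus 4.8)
The plan is to reduce the normalization requirement to a statement about the Hilbert--Schmidt coefficients of $W$ and then simply read off which terms survive. First I would make precise what "normalization of probability" demands: summing the generalized Born rule over the outcomes of each local instrument replaces that instrument by the trace-preserving (CPTP) map obtained as the sum of its branches, so by multilinearity the condition is equivalent to $\Tr[W\bigotimes_X M_X]=1$ for every collection of CPTP Choi operators $M_X$, one per party $X=A,B,C,\dots$. I take this generalized Born rule, together with its reduction to CPTP maps, as the working form of the normalization condition.

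Next I would fix, on every elementary system $s$ (each $x_1$ and each $x_2$), a Hermitian orthogonal basis $\{\sigma^s_0=\id_s,\sigma^s_1,\dots\}$ with $\Tr\sigma^s_i=d_s\,\delta_{i0}$ and $\Tr[\sigma^s_i\sigma^s_j]=d_s\,\delta_{ij}$, and expand both $W$ and each $M_X$ in the induced product basis. Here the \emph{type} of a product basis element on a party $X$ is exactly the record of whether it is trivial (the $\id$ factor) or nontrivial on $x_1$ and on $x_2$; "type $a_1$ on $A$" means the $A$-factor has the form $\sigma^{a_1}_{j}\otimes\id^{a_2}$ with $j\geq 1$. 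The key computation is to impose trace preservation on each $M_X$: pairing its expansion with the output identity shows that the output-trivial coefficients are rigidly fixed (the identity--identity coefficient equals $1/d_{x_2}$ and every nontrivial-input/trivial-output coefficient vanishes), while all output-nontrivial coefficients remain free as $M_X$ ranges over CPTP maps.

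I would then evaluate $\Tr[W\bigotimes_X M_X]$ using orthogonality of the basis; it collapses to a multilinear polynomial in the free coefficients whose coefficients are precisely the Hilbert--Schmidt coefficients $w_{\vec i}$ of $W$. Per term, each party contributes a factor that is (i) the constant $1/d_{x_2}$ when that party is trivial on both $x_1$ and $x_2$, (ii) identically zero when that party is type $a_1$ (nontrivial input, trivial output), or (iii) a free parameter when that party is nontrivial on the output $x_2$. Hence a term's monomial is a nonzero constant only for the all-identity term, is identically zero as soon as some party is type $a_1$, and is otherwise a genuine nontrivial monomial in the free parameters. Requiring the polynomial to equal $1$ for all CPTP choices forces $w_{\vec i}=0$ for every term that is neither the identity nor type $a_1$ on some party, while fixing the identity coefficient (this last condition being exactly $\Tr W=d_O$). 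Reading off the surviving terms gives the claimed characterization, and since the whole argument is a chain of equivalences it yields both directions at once.

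The step I expect to be the main obstacle is justifying that "the polynomial is constant in the free parameters" is equivalent to "the coefficient of every nontrivial monomial vanishes." This needs two inputs: that the free coefficients of distinct parties enter multilinearly, so that the monomials attached to different $\vec i$ are linearly independent functions, and that as $M_X$ runs over CPTP maps its free coefficients genuinely sweep out an open set of the affine trace-preserving subspace, so that no hidden linear relation among them can rescue a would-be-forbidden term. The latter holds because any Hermitian, trace-preserving--normalized operator lies in the affine span of CPTP Choi matrices --- one may perturb the completely depolarizing channel in any admissible direction while remaining positive --- but I would argue this carefully, as it is precisely what upgrades "the total probability is constant" to the pointwise vanishing of the forbidden coefficients.
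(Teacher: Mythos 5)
The paper itself offers no proof of Proposition~\ref{prop:og}: it is imported verbatim (modulo rephrasing) from Oreshkov and Giarmatzi \cite{oreshkov2016causal}, so there is no in-paper argument to compare yours against. Judged on its own, your derivation is correct and is essentially the standard proof from that literature. The chain of reductions is sound: normalization over all instruments is equivalent to $\Tr[W\bigotimes_X M_X]=1$ for all CPTP Choi operators $M_X$; trace preservation pins the output-trivial coefficients of each $M_X$ (identity--identity equal to $1/d_{x_2}$, nontrivial-input/trivial-output equal to zero) and leaves the output-nontrivial ones free; hence each Hilbert--Schmidt term of $W$ contributes a constant (the identity term), an identical zero (any term that is type $x_1$ on some party), or a genuine monomial in the free coefficients (every other term). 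You also correctly flag and close the one real gap, namely that "constant on the set of CPTP maps" implies "all nontrivial monomial coefficients vanish": since the Choi operator of the completely depolarizing channel is positive definite, the free coefficients sweep an open neighborhood of the origin in the trace-preserving affine subspace, distinct terms $\vec{i}$ of $W$ yield distinct monomials in distinct variables, and a multilinear polynomial constant on an open set has vanishing nontrivial coefficients. Both directions then follow, with the identity coefficient condition reproducing $\Tr W=d_O$ as in Eq.~(\ref{eq:Wfc2}). I have no substantive objection; the only caveat is conventional (possible transposes in the Born rule), which does not affect the argument.
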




To understand this proposition we need to introduce the notion of types. A process $W^{ab\cdots}$ can be expanded in the Hilbert-Schmidt basis $\{\sigma_i^x\}_{i=0}^{d_x^2-1}$ of the $x$ subsystem operators $L(\mathcal{H}^x)$ as
\begin{align}
W^{ab\cdots}=\sum_{i,j,k,l,\cdots} w_{ijkl\cdots} \sigma_i^{a_1}\otimes \sigma_j^{a_2}\otimes  \sigma_k^{b_1}\otimes  \sigma_l^{b_2}\otimes \cdots, ~ w_{ijkl\cdots}\in \mathbb{R}.
\end{align}
We set the convention to take $\sigma_0^{x}$ to be $\id$ for any $x$. We refer to terms of the form $\sigma^{x}_i\otimes \id^{\text{rest}}$ for $i\ge 1$ as a type $x$ term, $\sigma^{x}_i\otimes \sigma^{y}_j \otimes \id^{\text{rest}}$ for $i,j\ge 1$ as a type $xy$ term etc. The identity term is referred to as a trivial term to be of trivial type.

Restricting attention to some party $A$, we say that a term is ``in type'' if it is $a_1$ type, and is ``out type'' if it is $a_2$ type. We say that it ``includes the in type'' if it is $a_1$ or $a_1a_2$ type, and ``includes the out type'' if it is $a_2$ or $a_1a_2$ type. On two parties $A$ and $B$, terms of type $a_2b_1$ and $a_1a_2b_1$ are called $A$ to $B$ signalling terms (since $A$'s output is correlated with $B$'s input), and similarly terms of type $a_1b_2$ and $a_1b_1b_2$ are called $B$ to $A$ signalling terms.

\section{Conditions for forming valid and invalid products}
In this section we introduce process products and prove our main results. A party $\{A',a_1',a_2'\}$ and a party $\{A'',a_1'',a_2''\}$ can be combined into a new party $\{A,a_1,a_2\}=\{A'A'',a_1'a_1'',a_2'a_2''\}$ if all channels from $a_1$ to $a_2$ can be applied by $A$. Here $A'A''$ is a shorthand notation for combining parties $A'$ and $A''$, and $xy$ is a system whose Hilbert space is $\mathcal{H}_x\otimes \mathcal{H}_y$.

For resources in quantum theory with definite causal structure, there is no restriction in taking tensor products. A channel $M$ is a two-party resource that mediates information between some party $A'$ and some party $B'$. Given any other channel $N$ mediating information between $A''$ and $B''$, the tensor product $M\otimes N$ associated with $A=A'A''$ and $B=B'B''$ is a valid channel. 

Such tensor products are crucial in information theory, as one often studies tasks in the asymptotic setting, where the same resource is used arbitrarily many times. Out of interests, for example in quantum gravity and in particular quantum black holes, we want to study information communication theory of processes with indefinite causal structure \cite{jia2017quantum}. We need to define products of processes and check if they are valid processes. Analogous to channel products, for two processes $W^{a'b'\cdots}$ and $Z^{a''b''\cdots}$ with the same number of parties, define their product $P^{ab\cdots}=W^{a'b'\cdots}\otimes Z^{a''b''\cdots}$. It takes in channels of parties $A,B,\cdots$ and outputs probabilities. Here $A=A'A''$, $B=B'B'', \cdots$. The situation for two parties is illustrated in Figure \ref{fig:pp}.

\begin{figure}
    \centering
\resizebox{.6\textwidth}{.225\textwidth}{
    \begin{tikzpicture}
\draw[dashed, thick, blue, fill=blue!30] (3,-8.5) node (v5) {} -- (2,-7.5) -- (0,-7.5) -- (0,-9.5) -- (1,-10.5) -- (v5);
\draw[dashed, thick, blue, fill=blue!30]  (1,-8.5) rectangle (3,-10.5);
\node at (2,-8) {\huge $\mathsf{b'_2}$};
\node at (2.5,-9) {\huge $B'$};

\draw[thick, fill=yellow!70] (3,-6) -- (2,-5) -- (-5,-5) -- (-5,-6.5) -- (-4,-7.5);
\draw[thick, fill=yellow!70] (-4,-13) -- (-5,-12) -- (-5,-10.5) -- (-2.5,-10.5) -- (-2.5,-7.5)--(-1.5,-7.5)--(-1.5,-11.5);
\draw[thick, fill=yellow!70] (3,-11.5) -- (2,-10.5) -- (-0.5,-10.5) -- (0.5,-11.5);
\draw[thick, fill=yellow!70] (-4,-7.5) node (v1) {} -- (-1.5,-7.5) -- (-1.5,-11.5) -- (-4,-11.5) -- (-4,-13) node (v4) {} -- (3,-13) -- (3,-11.5) -- (0.5,-11.5) -- (0.5,-7.5) -- (3,-7.5) -- (3,-6) node (v2) {} -- (-4,-6) -- (-4,-7.5) node (v3) {};
\draw[thick] (-5,-5) -- (-4,-6);
\draw[thick] (-2.5,-10.5) -- (-1.5,-11.5);
\draw[thick] (-5,-10.5) -- (-4,-11.5);
\node at (-1,-5.5) {\huge $Z$};

\draw[-triangle 90, dashed] [thick]  (1.5,-8) to (1.5,-7);
\node at (2,-11) {\huge $\mathsf{b'_1}$};
\draw[-triangle 90, dashed] [thick]  (1.5,-11) to (1.5,-10);

\draw[dashed, thick, red, fill=red!30]  (-4,-10.5)-- (-5,-9.5)--(-5,-7.5)--(-3,-7.5)--(-2,-8.5);
\draw[dashed, thick, red, fill=red!30]  (-4,-8.5) rectangle (-2,-10.5);
\draw[dashed, thick, red] (-5,-7.5) -- (-4,-8.5);
\draw[-triangle 90, dashed] [thick]  (-3.5,-8) to (-3.5,-7);
\draw[-triangle 90, dashed] [thick]  (-3.5,-11) to (-3.5,-10);
\node at (-4,-11) {\huge $\mathsf{a''_1}$};
\node at (-4,-8) {\huge $\mathsf{a''_2}$};
\node at (-4.5,-9) {\huge $A''$};

\draw[dashed, thick, blue, fill=blue!30] (4.5,-10) node (v5) {} -- (3.5,-9) -- (1.5,-9) -- (1.5,-11) -- (2.5,-12) -- (v5);
\draw[dashed, thick, blue, fill=blue!30]  (2.5,-10) rectangle (4.5,-12);
\node at (3.5,-9.5) {\huge $\mathsf{b'_2}$};
\node at (4,-10.5) {\huge $B'$};

\draw[thick, fill=yellow!70] (4.5,-7.5) -- (3.5,-6.5) -- (-3.5,-6.5) -- (-3.5,-8) -- (-2.5,-9);
\draw[thick, fill=yellow!70] (-2.5,-14.5) -- (-3.5,-13.5) -- (-3.5,-12) -- (-1,-12) -- (-1,-9)--(0,-9)--(0,-13);
\draw[thick, fill=yellow!70] (4.5,-13) -- (3.5,-12) -- (1,-12) -- (2,-13);
\draw[thick, fill=yellow!70] (-2.5,-9) node (v1) {} -- (0,-9) -- (0,-13) -- (-2.5,-13) -- (-2.5,-14.5) node (v4) {} -- (4.5,-14.5) -- (4.5,-13) -- (2,-13) -- (2,-9) -- (4.5,-9) -- (4.5,-7.5) node (v2) {} -- (-2.5,-7.5) -- (-2.5,-9) node (v3) {};
\draw[thick] (-3.5,-6.5) -- (-2.5,-7.5);
\draw[thick] (-1,-12) -- (0,-13);
\draw[thick] (-3.5,-12) -- (-2.5,-13);
\node at (0.5,-7) {\huge $W$};

\draw[-triangle 90, dashed] [thick]  (3,-9.5) to (3,-8.5);
\node at (3.5,-12.5) {\huge $\mathsf{b'_1}$};
\draw[-triangle 90, dashed] [thick]  (3,-12.5) to (3,-11.5);

\draw[dashed, thick, red, fill=red!30]  (-2.5,-12)-- (-3.5,-11)--(-3.5,-9)--(-1.5,-9)--(-0.5,-10);
\draw[dashed, thick, red, fill=red!30]  (-2.5,-10) rectangle (-0.5,-12);
\draw[dashed, thick, red] (-3.5,-9) -- (-2.5,-10);
\draw[-triangle 90, dashed] [thick]  (-2,-9.5) to (-2,-8.5);
\draw[-triangle 90, dashed] [thick]  (-2,-12.5) to (-2,-11.5);
\node at (-2.5,-12.5) {\huge $\mathsf{a'_1}$};
\node at (-2.5,-9.5) {\huge $\mathsf{a'_2}$};
\node at (-3,-10.5) {\huge $A'$};

\draw[thick, fill=yellow!70] (9.5,-14.5) -- (7,-12) -- (7,-10.5) -- (9.5,-10.5) -- (10,-8)--(12,-9)--(12,-13);
\draw[thick, fill=yellow!70] (16.5,-13) -- (14,-10.5) -- (12.5,-10.5) -- (14,-13);

\draw[dashed, thick, blue, fill=blue!30] (16.5,-10) node (v5) {} -- (15.5,-9) -- (13.5,-9) -- (13.5,-11) -- (14.5,-12) -- (v5);
\draw[dashed, thick, blue, fill=blue!30]  (14.5,-10) rectangle (16.5,-12);
\node at (15,-9.5) {\huge $\mathsf{b_2}$};
\node at (15.5,-10) {\huge $B$};

\draw[thick] (9.5,-10.5) -- (12,-13);
\draw[thick] (7,-10.5) -- (9.5,-13);
\node at (13,-11) {\huge $W$};

\node at (15,-11.5) {\huge $\mathsf{b_1}$};
\draw[-triangle 90, dashed] [thick]  (14.25,-12) to (14.25,-11);

\draw[dashed, thick, red, fill=red!30]  (9.5,-12)-- (7,-9.5)--(7,-7.5)--(9,-7.5)--(11.5,-10);
\draw[dashed, thick, red, fill=red!30]  (9.5,-10) rectangle (11.5,-12);
\draw[dashed, thick, red] (7,-7.5) -- (9.5,-10);
\draw[-triangle 90, dashed] [thick]  (9,-12) to (9,-11);
\node at (8.5,-11.5) {\huge $\mathsf{a_1}$};
\node at (8.5,-8.5) {\huge $\mathsf{a_2}$};
\node at (8.5,-10) {\huge $A$};

\draw[thick, fill=yellow!70] (16.5,-7.5) -- (14,-5) -- (7,-5) -- (7,-6.5) -- (9.5,-9);
\draw[thick, fill=yellow!70] (9.5,-9) node (v1) {} -- (12,-9) -- (12,-13) -- (9.5,-13) -- (9.5,-14.5) node (v4) {} -- (16.5,-14.5) -- (16.5,-13) -- (14,-13) -- (14,-9) -- (16.5,-9) -- (16.5,-7.5) node (v2) {} -- (9.5,-7.5) -- (9.5,-9) node (v3) {};
\draw[thick] (7,-5) -- (9.5,-7.5);

\draw[-triangle 90, dashed] [thick]  (9,-9) to (9,-8);

\draw[-triangle 90, dashed] [thick]  (14.25,-9.25) to (14.25,-8.25);
\node at (11.5,-6) {\huge $P$};

\draw[dashed, thick, blue] (14.5,-10) -- (14,-9.5);
\draw[dashed, thick, blue] (2.5,-10) -- (2,-9.5);

\draw[-triangle 90] [ultra thick] (5,-9.5) -- (6,-9.5);
\end{tikzpicture}
}
    \caption{Process product}
    \label{fig:pp}
\end{figure}
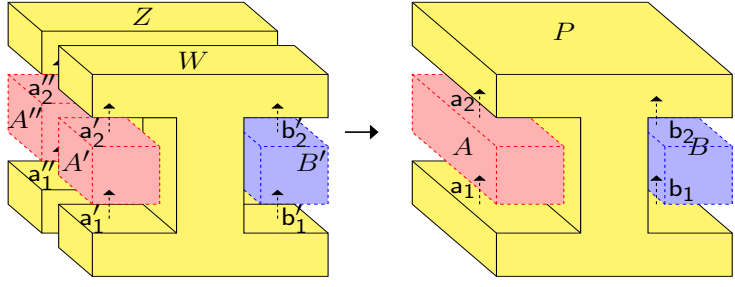


A simple example shows that process products are not always valid processes. Consider the process
\begin{align}\label{eq:rqs}
W^{xy}=\frac{d_O}{2}(\omega^{x_1}\otimes\rho_{\sim}^{x_2y_1}\otimes\omega^{y_2}+\omega^{x_2}\otimes\rho_{\sim}^{x_1y_2}\otimes\omega^{y_1}),
\end{align}


where $d_O$ is the dimension of the outputs, $\omega$ is the maximally mixed state and $\rho_{\sim}:=(\id+\sigma_3\otimes \sigma_3)/4$ is a maximally correlating state that can represent a classical identity channel in the $\{\ket{0},\ket{1}\}$ basis. This process can be viewed as an equal-weight classical mixture of a channel from $x$ to $y$ and another one from $y$ to $x$. Suppose $A'$ and $B'$ share a process $W^{a'b'}$ of this form, and $A''$ and $B''$ also share a process $W^{a''b''}$ of the same form. The operator $W^{ab}:=W^{a'b'}\otimes W^{a''b''}$ for the two parties $A=A'A''$ and $B=B'B''$ is not a valid process. $\rho_{\sim}^{a_2'b_1'}\otimes \rho_{\sim}^{a_1''b_2''}$ includes a term $\sigma_3^{a_2'}\otimes\sigma_3^{a_1''}\otimes \sigma_3^{b_1'}\otimes \sigma_3^{b_2''}$, which leads to a type $a_1a_2b_1b_2$ term and according to Proposition \ref{prop:og} renders the process $W^{ab}$ invalid. Intuitively, $\rho_{\sim}^{a_2'b_1'}\otimes \rho_{\sim}^{a_1''b_2''}$ creates a causal loop and violates the normalization of probability condition.


Although $W^{a'b'}$ and $W^{a''b''}$ cannot be composed directly, it is possible to have a global process $P^{ab}$ that reduces to the two individual processes upon partial tracing. For example, let $A$ and $B$ have a process of the same form of (\ref{eq:rqs}). This is a process on the combined parties. The reduced processes $\Tr_{a''b''}W^{ab}$ and $\Tr_{a'b'}W^{ab}$ are exactly $W^{a'b'}$ and $W^{a''b''}$. 

The restriction of process products has an analogy with the ``non-separability'' of entangled states. If $\rho^{xy}$ is entangled, then $\rho^{xy}\ne \rho^x\otimes \rho^y$. Similarly for some processes $W^{ab}\ne W^{a'b'}\otimes W^{a''b''}$. 
The difference is that for processes tensor products not only may not recover the original process, but may even be invalid.

Note that the processes in the example is can be viewed as classical because one can regard it as a classical mixture of classical resources. One can also substitute Choi states of quantum channels for those of classical channels to obtain an example of quantum process that is restricted in forming products. The invalidity of arbitrary products is a feature of quantum as well as classical theory of processes with indefinite causal structure.

In general, we have the following necessary and sufficient condition for when two multipartite processes cannot (and can) be composed into a valid product process.
\begin{proposition}\label{prop:ppc}
A product $P=W\otimes Z$ of two processes $W$ and $Z$ is not a valid process if and only if there exist a nontrivial term of $W$ and a nontrivial term of $Z$ that obey: 1) On any party where one term is trivial, the other is either trivial or includes the out type. 2) On any party where one term is the in type, the other term includes the out type.
\end{proposition}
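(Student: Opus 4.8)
The plan is to reduce validity of the product $P$ entirely to the normalization-of-probability criterion of Proposition~\ref{prop:og}, and then to read off conditions 1)--2) by a term-by-term type count. First I would note that the other two defining conditions of a process are automatic for a product: since $W\ge 0$ and $Z\ge 0$ we have $P=W\otimes Z\ge 0$, and since the trace and the output dimension are both multiplicative under tensor products, $\Tr P=\Tr W\,\Tr Z=d_O^{W}d_O^{Z}=d_O^{P}$, so \eqref{eq:Wfc1} and \eqref{eq:Wfc2} hold. Hence $P$ is a process if and only if it meets the condition of Proposition~\ref{prop:og}: every nontrivial term in its Hilbert--Schmidt expansion is the in type (type $a_1$) on at least one party.

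Next I would identify the terms of $P$. Expanding $W=\sum_\alpha w_\alpha T_\alpha$ and $Z=\sum_\beta z_\beta S_\beta$, and using that the tensor products of the single-system Hilbert--Schmidt bases on the primed and double-primed systems form a Hilbert--Schmidt basis on the combined systems $a_1=a_1'a_1''$, $a_2=a_2'a_2'',\dots$, the basis terms of $P$ are exactly the operators $T_\alpha\otimes S_\beta$ with coefficients $w_\alpha z_\beta$. Thus $P$ contains the term $T_\alpha\otimes S_\beta$ precisely when $T_\alpha$ appears in $W$ and $S_\beta$ appears in $Z$. The key bookkeeping step is then to compute the type of $T_\alpha\otimes S_\beta$ on a combined party $A=A'A''$ from the type of $T_\alpha$ on $A'$ and the type of $S_\beta$ on $A''$: the combined input $a_1$ is nontrivial iff at least one factor includes the in type on its subparty, and likewise the combined output $a_2$ is nontrivial iff at least one factor includes the out type. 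Consequently $T_\alpha\otimes S_\beta$ is the in type on $A$ (nontrivial input, trivial output) if and only if neither factor includes the out type on its subparty while at least one is the in type there --- that is, iff the pair of subparty types is $(\text{trivial},\text{in})$, $(\text{in},\text{trivial})$, or $(\text{in},\text{in})$.

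With this count in hand the two directions follow. I would first dispose of terms with a trivial factor: if $T_\alpha=\id$ and $S_\beta$ is nontrivial, then because $Z$ is a valid process $S_\beta$ is the in type on some party $Y''$, whence $T_\alpha\otimes S_\beta$ is the in type on $Y=Y'Y''$ by the count above; symmetrically when $S_\beta=\id$. So a term can fail the Proposition~\ref{prop:og} condition for $P$ only if both factors are nontrivial, which is exactly why the statement quantifies over nontrivial terms of $W$ and of $Z$. It remains to note that such a term is the in type on \emph{no} party iff on every party the subparty-type pair avoids the three listed values, and to match this exclusion with the stated conditions: forbidding $(\text{trivial},\text{in})$ and $(\text{in},\text{trivial})$ on every party is precisely condition 1), and additionally forbidding $(\text{in},\text{in})$ is precisely condition 2). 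Combining with the Proposition~\ref{prop:og} reduction, $P$ is invalid iff such a nontrivial pair exists, which is the claim.

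The main obstacle I anticipate is the type bookkeeping of the middle paragraph --- correctly tracking when the combined input and output are (non)trivial from the four possible subparty types (trivial, in, out, both) and isolating the three pairs that yield the in type --- together with the one genuinely structural point, namely that the validity of $W$ and $Z$ is what rules out spurious violations coming from terms with a trivial factor; without it the reduction to ``both factors nontrivial'' would fail.
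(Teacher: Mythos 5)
Your proof is correct and follows essentially the same route as the paper's: reduce validity of $P$ to the Proposition~\ref{prop:og} criterion and then do party-by-party type bookkeeping on product terms $T_\alpha\otimes S_\beta$, identifying the pairs of subparty types that conditions 1) and 2) exclude. You are in fact somewhat more explicit than the paper on two points it leaves implicit --- that positivity and the trace condition are automatic for tensor products, and that a violating term must have both factors nontrivial because the validity of $W$ and $Z$ forces any term with a trivial factor to be in type on some party --- but this is a refinement of the same argument, not a different one.
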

\begin{proof}
Suppose $W$ and $Z$ satisfy the conditions and consider the tensor product of the two nontrivial terms. By Proposition \ref{prop:og}, to prove that $P$ is invalid we need to show that $P$ contains a nontrivial term that is not type $a_1$ for any party $A$. Conditions 1) and 2) guarantee that this is satisfied for the product term we consider.

Conversely, suppose $P$ is not a valid process. By Proposition \ref{prop:og}, $P$ contains a nontrivial term that is not type $a_1$ for any party $A$. This term must arise out of a tensor product of nontrivial terms. On any $A$ this term of $P$ is type trivial, $a_1a_2$ or $a_2$. We consider each case in turn. For any $A$ where this term is trivial, it must come from a tensor product of terms that are trivial on $A$. Write this kind of tensor product as $(0,0)$, where $0$ denotes the trivial type. Next, for any $A$ where this term is type $a_1a_2$, it must come from a tensor product of the kind $(0,12)$, $(1, 12)$, $(1,2)$ or $(2,12)$, where $1$ and $2$ denote in and out types respectively. Finally, for any $A$ where this term is type $a_2$, it must come from a tensor product of the kind $(0,2)$ or $(2,2)$. To sum up, on any $A$, this term of $P$ comes from a tensor product of the kind $(0,0)$, $(0,12)$, $(0,2)$, $(1, 12)$, $(1,2)$, $(2,12)$,  or $(2,2)$. This implies conditions 1) and 2).
\end{proof}

A useful special case is the condition on two-party processes. Intuitively, the fulfillment of the two conditions in the corollary below give rise to causal loops, which violate the normalized probability condition for processes and hence lead to invalid products.
\begin{corollary}
A product $P^{ab}=W^{a'b'}\otimes Z^{a''b''}$ of two-party processes is not a valid process if and only if: 1) Both $W$ and $Z$ have signalling terms; 2) The Hilbert-Schmidt terms of $W$ and $Z$ put together contain signalling terms of both directions.
\end{corollary}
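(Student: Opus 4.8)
The plan is to obtain the corollary by specializing Proposition \ref{prop:ppc} to the two combined parties $A=A'A''$ and $B=B'B''$, the crucial extra ingredient being that the factors $W$ and $Z$ are themselves valid processes. First I would apply Proposition \ref{prop:og} to $W$ alone: every nontrivial term of $W$ must be in type on at least one of $A'$, $B'$. Consequently, if a nontrivial term of $W$ includes the out type on $A'$ it cannot be in type on $A'$, so it must be in type on $B'$; that is, it has type $a_2'b_1'$ or $a_1'a_2'b_1'$ and is therefore an $A$ to $B$ signalling term. The three symmetric statements hold as well. This translation step is the heart of the argument: for a valid two-party process, ``a term includes the out type on one party'' is equivalent to ``the term is signalling'', with the direction fixed by the party carrying the out type.

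Next I would feed a candidate pair---one nontrivial term of $W$ and one nontrivial term of $Z$---into the two conditions of Proposition \ref{prop:ppc}, evaluated on $A$ and on $B$. Unwinding those conditions, the pair renders $P$ invalid exactly when the product term fails to be in type on each combined party, which reduces to demanding that on $A$ at least one of the two terms includes the out type, and likewise on $B$. (The competing possibility that the product term is trivial on a combined party is excluded by validity, since a factor term trivial on one party is forced to be in type on the other, and two such terms would then be in type only on the remaining party.) By the translation step, ``some term of the pair includes the out type on $A$'' means the pair contains an $A$ to $B$ signalling term, and the analogous statement holds for $B$. As no single term can be signalling in both directions, an invalidating pair must consist of one $A$ to $B$ signalling term and one $B$ to $A$ signalling term, one drawn from each factor.

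It then remains to show that the existence of such a cross pair is equivalent to conditions 1) and 2). Writing $D_W, D_Z \subseteq \{A\to B, B\to A\}$ for the sets of signalling directions appearing in $W$ and $Z$, condition 1) says both sets are nonempty and condition 2) says their union is all of $\{A\to B, B\to A\}$. A short check then shows that $D_W$ and $D_Z$ being nonempty with full union is equivalent to there being opposite directions across the two factors (one in $D_W$ and the reverse in $D_Z$, or vice versa), which is exactly the cross-pair condition established above; this closes both implications.

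I expect the main obstacle to be the translation step of the first paragraph. It is only the validity of the individual factors, via Proposition \ref{prop:og}, that collapses the weaker type condition ``includes the out type on a party'' into the combinatorially sharp notion of a signalling term; without it the generic conditions of Proposition \ref{prop:ppc} could not be restated in the clean directional form of the corollary. Once this collapse is in hand, the reduction of the pair conditions to ``one out type on each combined party'' and the final set-theoretic equivalence are routine.
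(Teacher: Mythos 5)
Your proposal is correct and follows essentially the same route as the paper's proof: both rest on applying Proposition \ref{prop:og} to the individual factors so that ``includes the out type on one party'' becomes ``is a signalling term in the corresponding direction,'' and on recognizing that an invalidating pair in Proposition \ref{prop:ppc} must consist of signalling terms of opposite directions, one from each factor. The only difference is presentational---you organize the argument as a single chain of equivalences (including the explicit set-theoretic step relating the cross-pair condition to conditions 1) and 2)), whereas the paper verifies the two implications separately---but the underlying ideas coincide.
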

\begin{proof}
Suppose $W$ and $Z$ obey the two conditions. Then we can pick a signalling term from $W$ of one direction and a signalling term from $Z$ of the other direction. We show that this pair of terms satisfy conditions 1) and 2) in Proposition \ref{prop:ppc}, and hence the product is not a valid process. Neither term is trivial on either of the two parties, so 1) of Proposition \ref{prop:ppc} is fulfilled. 2) is also fulfilled because the terms signal to different directions.

Conversely, suppose $P$ is not valid. By Proposition \ref{prop:ppc}, there is a nontrivial term from $W$ and a nontrivial term from $Z$ that obey 1) and 2) of Proposition \ref{prop:ppc}. By Proposition \ref{prop:og}, both terms are in type on some party. By 2) of Proposition \ref{prop:ppc} they must be in type on different parties, and they include the out type on the parties where they are not in type. In other words, they are signalling terms to different directions. This proves conditions 1) and 2) of the statement.
\end{proof}

A product of more than two processes can be constructed iteratively, and the validity of the product process must be checked at each step. If a set of processes cannot form a valid product in one sequence of construction, changing the sequence of construction will not make it valid. This is because the invalid term will be always be present.

\section{Discussions}
We showed that for processes we cannot take tensor products unrestrictedly. For communication theory, our result implies that communication tasks defined in the asymptotic limit is not meaningful for processes characterized by Proposition \ref{prop:ppc} when local operation is unrestricted for the combined parties $A=A'A''\cdots$, $B=B'B''\cdots$, $\cdots$.\footnote{If local operations inside the combined party are assumed to be uncorrelated among the individual parties, the product process can be viewed as valid. In this case one might as well treat the individual processes separately without forming products. In communication theory this can correspond to repeated but uncorrelated uses of resources.} For these processes at least it is more suitable to study one-shot capacities. Similarly caution needs to be taken for asymptotic entanglement theory of processes \cite{jia2017quantum}.

The restriction induces some interesting questions. We studied the restriction within the process framework. To what extent does the restriction hold for indefinite causal structure theories in general \cite{hardy_probability_2005, hardy2007towards}? For the particular example we used to demonstrate the restriction, there exists a global process that reduces to the two individual processes. When is this true in general?

\section*{Acknowledgements}
J. D. thanks Lucien Hardy and Achim Kempf for useful discussions. N. S. thanks Lucien Hardy for useful discussions. Research at Perimeter Institute is supported by the Government of Canada through Industry Canada and by the Province of Ontario through the Ministry of Economic Development and Innovation. 

\bibliographystyle{apsrev}
\bibliography{bib.bib}

\end{document}